\documentclass[letterpaper, 10 pt, conference]{ieeeconf}

\IEEEoverridecommandlockouts

\overrideIEEEmargins

\usepackage{mathptmx} 
 \usepackage{times} 
\usepackage[utf8]{inputenc}
\usepackage{amssymb}
\usepackage{amsmath}
\usepackage{mathrsfs}
\usepackage{graphicx}
\usepackage{hyperref}
\usepackage{bbm}
\usepackage{algorithm}
\usepackage[noend]{algpseudocode}
\usepackage{cite}
\usepackage{amsmath,amssymb,amsfonts,amsthm}
\usepackage{textcomp}
\usepackage{epstopdf}
\usepackage{tikz}
\usetikzlibrary{calc}
\usetikzlibrary{matrix, positioning, fit, calc}
\usepackage{tkz-euclide}
\usepackage{graphics} 
\usepackage{epsfig} 
\usepackage{cite}  
\usepackage{xcolor}
\usepackage{bbm}
\usepackage{enumerate}
\usepackage{algorithm}
\usepackage{amstext}
\usepackage{subfig}
\usepackage{multicol}

\theoremstyle{plain}

\newtheorem{assumption}{Assumption}

\newtheorem{proposition}{Proposition}

\newtheorem{lemma}{Lemma}

\newtheorem{theorem}{Theorem}

\newcommand{\G}{\mathcal{G}}
\newcommand{\V}{\mathcal{V}}

\newcommand{\Ls}{\mathcal{L}}
\newcommand{\E}{\mathcal{E}}
\newcommand{\K}{\mathcal{K}}

\newcommand{\M}{\mathcal{M}}

\newcommand{\N}{\mathcal{N}}

\renewcommand{\L}{\mathbf{L}}

\usepackage{accents}

\newcommand*{\dif}{\mathop{}\!\mathrm{d}}

\newcommand{\real}{\mathbb{R}}
\newcommand{\complex}{\mathbb{C}}

\newcommand{\naturals}{\mathbb{N}}
\newcommand{\F}{\mathbb{F}}

\newcommand{\lb}{[\![}
\newcommand{\rb}{]\!]}
\newcommand{\sys}[2]{\lb #1, #2\rb}
\newcommand{\innerp}[2]{\left \langle #1, #2\right \rangle}

\newcommand{\diag}{\mathrm{diag}}

\title{Neighbourhood conditions for network stability with link uncertainty
\thanks{Supported in part by the Australian Research Council (DP210103272).}}
\author{Simone Mariano and Michael Cantoni
	\thanks{S. Mariano and M. Cantoni are with the Department of Electrical and Electronic Engineering, The University of Melbourne, Australia. E-mails: {\tt \{simone.mariano,cantoni\}@unimelb.edu.au}}}

\begin{document}

\maketitle
\begin{abstract}
The main result relates to structured robust stability analysis of an input-output model for networks with link uncertainty.
It constitutes a collection of integral quadratic constraints, which together imply robust stability of the uncertain networked dynamics. Each condition is decentralized in the sense that it depends on
model data pertaining to 
the neighbourhood of 
a specific agent. By contrast, pre-existing conditions for the network model are link-wise decentralized, with each involving conservatively more localized problem data. A numerical example is presented to 
illustrate the advantage of the new broader neighbourhood conditions.  
\end{abstract}
\begin{keywords}
Integral-Quadratic Constraints (IQCs), Network Robustness, Scalable Analysis.
\end{keywords}

\section{Introduction}

Motivated by problems in power and water distribution, transportation, ecology, and economics, 
large-scale networks of dynamical systems have been long studied in system and control theoretic terms; e.g., see~\cite{siljak1978large,arcak2016networks} for state-space methods, and~\cite{moylan1978stability,vidyasagar1981input} for input-output methods. 

In this paper, an input-output approach, based on 
integral quadratic constraints (IQCs)
~\cite{megretski1997system}, is pursued to 
progress the 
development of 
scalable stability certificates for networks with uncertain links. The uncertain network model considered here was recently developed in~\cite{MCLinks}, with a focus on assessing the impact of link uncertainty expressed relative to the ideal (unity gain) link.  
The work is related to~\cite{lestas2006scalable,jonsson2010scalable,andersen2014robust,khong2014scalable}, and as in \cite{MCLinks}, most closely to \cite{pates2016scalable}. The main contribution is 
an alternative approach to the decomposition of a monolithic IQC certificate that implies robust network stability. 
The new decomposition 
involves a 
collection of sufficient conditions, each depending on model data pertaining to a specific agent, its neighbours, and the corresponding links. That is, each condition is local to a specific neighbourhood. As such, compared to the link-wise decomposition 
presented in \cite{MCLinks}, each neighbourhood based condition involves more (still localized) model data. This provides scope for reduced conservativeness, as illustrated by a numerical example.

The paper is organized as follows: Various preliminaries are established next. In Section~\ref{sec:ideal+IQC}, the structured feedback model of a network with uncertain links, and a correspondingly structured IQC based robust stability condition, which is amenable to decomposition, are recalled from~\cite{MCLinks}. Then, in 
Section~\ref{sec:nodes}, 
the novel neighborhood based decomposition is developed, alongside statement of the link-wise conditions from~\cite{MCLinks} for comparison, including a numerical example. Some concluding remarks are provided in Section~\ref{sec:conc}.

\section{Preliminaries} 

\subsection{Basic notation} \label{subsec:notation}
The natural, real, and complex numbers are denoted $\naturals$, $\real$, and $\complex$, respectively.
For $i,j\in\naturals$, $[i:j]:=\{k\in\naturals~|~i\leq k \leq j\}$, which is empty if $j<i$, $\real_{\bullet \alpha}:=\{\beta\in\mathbb{R}~|~\beta \bullet \alpha\}$ for given order relation $\bullet\in\{>,\geq,<,\leq\}$, and $[\alpha,\beta]:=\mathbb{R}_{\geq \alpha} \cap \mathbb{R}_{\leq \beta}$. 

With $\F\in\{\real,\complex\}$, for $p\in\naturals$, the $p$-dimensional Euclidean space over $\F$ is denoted by $\F^p$. Given $x\in\F^p$, for $i\in[1:p]$, the scalar $x_i\in\F$ denotes the $i$-th coordinate.
The vectors $\mathbf{1}_p,\mathbf{0}_p\in\mathbb{R}^p$ satisfy $1=(\mathbf{1}_p)_i=1+(\mathbf{0}_p)_i$ for every $i\in[1:p]$.

$\F^{p\times q}$ denotes the space of $p\times q$ matrices over $\F$, for $p,q\in\naturals$.
The identity matrix is denoted by $I_p \in \mathbb{F}^{p\times p}$, the square zero matrix by $O_p\in \mathbb{F}^{p\times p}$, and the respective $p\times q$ matrices of ones and zeros by $\mathbf{1}_{p\times q}$ and $\mathbf{0}_{p\times q}$. Given $M\in \F^{p\times q}$, for $i\in[1:p]$, $j\in[1:q]$, the respective matrices $M_{(\cdot,j)} \in \F^{p\times 1}\sim\F^{p}$, and $M_{(i,\cdot)}\in\F^{1\times q}$, denote the $j$-th column, and $i$-th row. Further, $M_{(i,j)}\in\F$ denotes the entry in position $(i,j)$. Given $x\in\mathbb{F}^p$, the matrix $M=\mathrm{diag}(x)\in\mathbb{F}^{p\times p}$ is such that $M_{(i,i)}=x_i$ and $M_{(i,j)}=0$, $j\neq i\in[1:p]$, whereby $I_p=\mathrm{diag}(\mathbf{1}_p)$. The transpose of $M\in\mathbb{F}^{p\times q}$ is denoted by $M^\prime\in\mathbb{F}^{q\times p}$, and $M^*=\bar{M}^\prime$ denotes the complex conjugate transpose. For $M_i=M_i^*\in\mathbb{F}^{p\times p}$, $i\in\{1,2\}$, $M_i\succ 0$ means there exists $\epsilon>0$ such that $x^* M_i x \geq \epsilon x^*x$ for all $x\in\mathbb{F}^p$, $M_i\succeq 0$ means $x^* M_i x \geq 0$ for all $x\in\mathbb{F}^p$, and $M_1\succeq (\text{resp.}~\succ) M_2$ means $M_1-M_2 \succeq (\text{resp.}~\succ) 0$. 

\subsection{Signals and systems}
The Hilbert space of square integrable signals $v=(t\in\real_{\geq 0}\mapsto v(t)\in\real^p)$ is denoted $\L_{2\,}^p$, where the inner-product $\langle v, u \rangle := \int_{0}^{\infty}v(t)^\prime u(t)\dif t$ and norm $\|v\|_2 := \langle v, v \rangle^{1/2}$ are finite; the superscript is dropped when $p=1$. The corresponding extended space of locally square integrable signals is denoted by $\L_{2e}^p$; i.e.,  $v:\real_{\geq 0}\rightarrow\real^p$ such that $\boldsymbol{\pi}_{\tau}(v)\in\L_{2\,}^p$ for all $\tau\in\real_{\geq 0}$, where $(\boldsymbol{\pi}_\tau(v))(t):=f(t)$ for $t\in[0,\tau)$, and $(\boldsymbol{\pi}_\tau(v))(t):=0$ otherwise. 
The composition of maps $F:\L_{2e}^p \mapsto \L_{2e}^r$ and $G:\L_{2e}^q \mapsto \L_{2e}^p$ is denoted by $F\circ G := (v\mapsto F(G(v))$, and the direct sum by $F\oplus G:=((u,v)\mapsto (F(u),G(v)))$. Similarly, $\bigoplus_{i=1}^n G_i = G_1\oplus\cdots\oplus G_n$. When $G$ is linear, in the sense $(\forall \alpha,\beta\in\real)~(\forall u,v\in\L_{2e}^q)~G(\alpha u + \beta v) = \alpha G(u) + \beta G(v)$, the image of $v$ under $G$ is often written $Gv$, and in composition with another linear map $\circ$ is dropped. The action of a linear $G:\L_{2e}^{q}\rightarrow\L_{2e}^{p}$ corresponds to the action of $p\cdot q$ scalar systems $G_{(i,j)}:\L_{2e}\rightarrow \L_{2e}$, $i\in[1:p]$, $j\in[1:q]$, on the coordinates of the signal vector input associated with  $\L_{2e}^q\sim\L_{2e}\times\cdots\times\L_{2e}$; i.e., $(Gv)_i = \sum_{j=1}^q G_{(i,j)}v_j$. Matrix notation is used to denote this. Also, for convenience, the map of pointwise multiplication by a matrix on $\L_{2e}$ is not distinguished in notation from the matrix. 

A system is any map $G:\L_{2e}^q\rightarrow \L_{2e}^p$, with $G(0)=0$, that is {\em causal} in the sense  $\boldsymbol{\pi}_\tau(G(u)) = \boldsymbol{\pi}_\tau(G (\boldsymbol{\pi}_\tau(u))$ for all $\tau\in\real_{\geq 0}$. It is
called {\em stable} if $u\in\L_{2\,}^q$ implies $G(u) \in \L_{2\,}^p$ and  $\|G\|:=\sup_{0\neq u} \|G(u)\|_2/\|u\|_2< \infty$ (the composition of stable systems is therefore stable.) The feedback interconnection of $G$ with system $\Delta:\L_{2e}^p \rightarrow \L_{2e}^q$ is {\em well-posed} if for all $(d_y,d_u)\in\L_{2e}^p\times\L_{2e}^q$, there exists unique $(y,u)\in\L_{2e}^p\times\L_{2e}^q$, such that 
\begin{align} \label{eq:feedback1}
y = G(u) + d_y, \qquad u = \Delta(y) + d_u, 
\end{align}
and $[\![G,\Delta]\!] := ((d_y,d_u) \mapsto (y,u))$ is causal; see Figure~\ref{fig:realnet_0}. If, in addition, $\|[\![G,\Delta]\!]\| < \infty$, then the closed-loop is called stable. 
\begin{figure}[htbp]
\centering
\begin{tikzpicture}
  \vspace{-1.5cm}
\node [draw,
    minimum width=1.2cm,
    minimum height=1.2cm,
]  (permutation) at (0,0) {$G$};
 
\node [draw,
    minimum width=1.2cm, 
    minimum height=1.2cm, 
    below=.5cm  of permutation
]  (nets) {$\Delta$};

\node[draw,
    circle,
    minimum size=0.6cm,
    left=1cm  of nets
] (sum){};
 
\draw (sum.north east) -- (sum.south west)
(sum.north west) -- (sum.south east);
 
\node[left=-1pt] at (sum.center){\tiny $+$};
\node[above=-1pt] at (sum.center){\tiny $+$};


\node[draw,
    circle,
    minimum size=0.6cm,
  right= 1cm   of permutation 
] (sum2){};
 
\draw (sum2.north east) -- (sum2.south west)
(sum2.north west) -- (sum2.south east);
 
\node[right=-1pt] at (sum2.center){\tiny $+$};
\node[below=-1pt] at (sum2.center){\tiny $+$};

\draw[-stealth] (sum.east) -- (nets.west)
    node[midway,above]{$y$};
 
\draw[-stealth] (permutation.west) -| (sum.north);
 
\draw [stealth-] (sum.west) -- ++(-1,0) 
    node[midway,above]{$d_{y}$};

\draw[-stealth]  (nets.east) -|  (sum2.south);
 
\draw[-stealth] (sum2.west) -- (permutation.east) 
    node[midway,above]{$u$};
 
\draw [stealth-] (sum2.east) -- ++(1,0) 
    node[midway,above]{$d_{u}$};
 \vspace{-1cm}
\end{tikzpicture}
    \caption{Standard feedback interconnection.  \vspace{-0.5cm}}
    \label{fig:realnet_0}      
\end{figure}
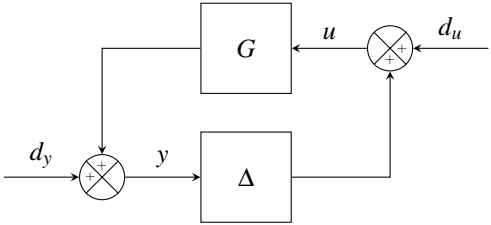

The following result is the well-known IQC robust feedback stability theorem, taken from~\cite{megretski1997system}:
\begin{theorem}
\label{thm:robust_stability}
Given stable system $\Delta:\L_{2e}^p \rightarrow \L_{2e}^q$, and bounded linear map $\Pi:\L_{2\,}^p\times\L_{2\,}^q \rightarrow \L_{2\,}^p\times \L_{2\,}^q$ that is self-adjoint in the sense $(\forall g_1,g_2\in\L_{2\,}^p\times \L_{2\,}^q)~\langle g_1, \Pi g_2 \rangle=\langle \Pi g_1, g_2\rangle$, suppose 
\begin{equation}
\label{eq:stab_IQC_Delta}
\left\langle 
(y,u),
\Pi 
(y,u)
\right\rangle \geq 0,
\quad
u=\alpha \Delta(y),
\end{equation}
for all $(y,\alpha)\in\L_{2\,}^p \times [0,1]$. Further, given stable system $G:\L_{2e}^q\rightarrow \L_{2e}^p$, suppose $[\![G,\alpha \Delta]\!]$ is well-posed for all $\alpha\in[0,1]$, and there exists $\epsilon>0$ such that 
\begin{equation}
\label{eq:stab_G}
\left\langle 
(y,u),
\Pi 
(y,u)
\right\rangle \leq -\epsilon \| u\|_2^2,
\quad
y = G(u),
\end{equation}
for all $u\in\L_{2\,}^q$.
Then, $\sys{G}{\Delta}$ is stable. 
\end{theorem}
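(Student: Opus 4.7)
My plan is to follow the classical homotopy argument originally developed for this result. Define $\mathcal{S}:=\{\alpha\in[0,1]~|~\|\sys{G}{\alpha\Delta}\|<\infty\}$. I would show that $\mathcal{S}$ is nonempty, relatively open in $[0,1]$, and relatively closed; connectedness of $[0,1]$ then forces $\mathcal{S}=[0,1]$, and in particular $1\in\mathcal{S}$, which is the conclusion. Nonemptiness holds at $\alpha=0$, where the loop decouples into $u=d_u$ and $y=G(d_u)+d_y$, so $\|\sys{G}{0}\|$ is finite by stability of $G$.

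The heart of the argument is a uniform a priori gain bound: a constant $C>0$ independent of $\alpha$ such that $\|\sys{G}{\alpha\Delta}\|\leq C$ for every $\alpha\in\mathcal{S}$. To derive it, fix $\alpha\in\mathcal{S}$ and $(d_y,d_u)\in\L_{2\,}^p\times\L_{2\,}^q$, let $(y,u)$ denote the corresponding $\L_{2\,}$ closed-loop response, and set $\tilde{y}:=G(u)$ and $\tilde{u}:=\alpha\Delta(y)$, so that $y-\tilde{y}=d_y$ and $u-\tilde{u}=d_u$. Because $y\in\L_{2\,}^p$, condition \eqref{eq:stab_IQC_Delta} applies to $(y,\tilde{u})$, while \eqref{eq:stab_G} applies to $(\tilde{y},u)$. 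Subtracting the two inequalities and invoking the polarization identity
\begin{equation*}
\langle a,\Pi a\rangle-\langle b,\Pi b\rangle=\langle a-b,\Pi(a+b)\rangle,
\end{equation*}
which is valid by self-adjointness of $\Pi$, turns the difference into $\langle(d_y,-d_u),\Pi(y+\tilde{y},\tilde{u}+u)\rangle\geq\epsilon\|u\|_2^2$. Estimating the left-hand side by $\|\Pi\|$ times Cauchy--Schwarz and using $\|\tilde{y}\|_2\leq\|G\|\|u\|_2$ together with $\|\tilde{u}\|_2\leq\|\Delta\|\|y\|_2$ yields a scalar quadratic inequality in $\|u\|_2$ whose resolution delivers $\|u\|_2\leq C'(\|d_y\|_2+\|d_u\|_2)$ uniformly in $\alpha$; the triangle inequality applied to $y=G(u)+d_y$ then completes the uniform bound.

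Given the uniform bound, openness of $\mathcal{S}$ follows from a Neumann-series perturbation argument: $\sys{G}{\alpha\Delta}$ can be rewritten as the feedback of $\sys{G}{\alpha_0\Delta}$ around $(\alpha-\alpha_0)\Delta$, and for $|\alpha-\alpha_0|$ small this inner loop is closable using the uniform gain bound together with boundedness of $\Delta$. Closedness is handled by letting $\alpha_n\to\alpha^*\in[0,1]$ with $\alpha_n\in\mathcal{S}$, and transporting the uniform bound to the limit via well-posedness of $\sys{G}{\alpha^*\Delta}$.

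The main obstacle is the interplay between the IQC hypotheses, which live on $\L_{2\,}$ signals, and the well-posedness hypothesis, which lives on $\L_{2e}$. Well-posedness must be assumed for every $\alpha\in[0,1]$ precisely because \eqref{eq:stab_IQC_Delta} cannot be invoked on a closed-loop pair $(y,\tilde{u})$ until $y$ has been certified to lie in $\L_{2\,}^p$. Making this rigorous requires a truncation argument based on causality and the projections $\boldsymbol{\pi}_\tau$, which is the technical delicacy underpinning both the openness and closedness steps; the remaining manipulations are essentially Cauchy--Schwarz and quadratic algebra.
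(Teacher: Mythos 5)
The paper does not prove this theorem at all: it is quoted verbatim from the cited reference \cite{megretski1997system}, and your homotopy outline (non-emptiness at $\alpha=0$, the uniform a priori bound obtained by subtracting the two IQC inequalities and polarizing, the small-gain/Neumann openness step with the truncation argument handling the $\L_{2e}$-versus-$\L_{2\,}$ issue) is precisely the classical argument given there. The sketch is correct; the only cosmetic remark is that the separate closedness step is redundant, since the openness radius your uniform bound delivers is independent of $\alpha_0$, so finitely many steps of fixed size already carry $\mathcal{S}$ from $0$ to $1$.
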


\subsection{Graphs} \label{sec:graphs}
Let $\G=(\V,\E)$ be a simple (self-loopless and undirected) graph, where $\V=[1:n]$ is the set of $n\in\mathbb{N}\setminus\{1\}$ vertices, and $\E\subset\{\{i,j\}~|~i,j\in\V\}$ is the set of $m:=|\E|\in\mathbb{N}$ edges. Bijective $\kappa_{\E}:\E\rightarrow \M$ with $\M:=[1:m]$ denotes a fixed enumeration of the edge set $\E$ for indexing. The set $\N_i:=\{j~|~\{i,j\}\in\E\}$ comprises the neighbours of $i\in\V$, $\E_i:=\{\{i,j\}~|~j\in\mathcal{N}_i\}$ is the corresponding neighbourhood edge set, 
and bijective $\kappa_{\E_i}:\E_i\rightarrow \M_i$ is a fixed enumeration of the $m_i:=|\E_i|$ edges, where $\M_i:=[1:m_i]$. The edge indexes associated with $i\in \V$ are gathered in the set denoted by $\K_i:=\{\kappa_{\E}(\{i,j\}) \mid j \in \N_i\}$. For each $k\in \M$, the set $\Ls_k:=(\K_i \cup \K_j) \setminus \{k\}$, where $\{i,j\} = \kappa_{\E}^{-1}(k)$, is the collection of all indexes of the edges associated with either of the neighbouring vertices $i$ and $j$, excluding the one linking them. The following technical result regarding the given graph $\mathcal{G}=(\V,\E)$ is used subsequently.
\begin{lemma}
    \label{lem:graph_sums_equivalence}
    Given arbitrary $E_k,F_k:\L_{2\,}^p \rightarrow \L_{2\,}^p$ 
    for $k\in \M$: 
    \begin{enumerate} [i)]
        \item \label{lem:graph_sums_equivalence1} $\displaystyle \sum_{i\in \V}\sum_{k\in \K_i} \tfrac{1}{2}F_k= \sum_{k\in \M} F_k$;
        \item \label{lem:graph_sums_equivalence4} $\displaystyle  \sum_{i\in \V}\sum_{k\in \K_i} \,\sum_{\ell\in \K_i\setminus\{k\}} E_k \circ F_\ell = \sum_{k\in \M} \sum_{\ell \in \Ls_k} E_k \circ F_\ell$.
    \end{enumerate}
\end{lemma}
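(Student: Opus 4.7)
The plan is to prove both identities by exchanging the order of summation and using the fact that a vertex $i \in \V$ satisfies $k \in \K_i$ if and only if $i$ is an endpoint of the edge $\kappa_\E^{-1}(k)$; since $\G$ is simple and loop-free, each edge has exactly two distinct endpoints.

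For part \ref{lem:graph_sums_equivalence1}, I would swap the order of summation to obtain
\[
\sum_{i\in \V}\sum_{k\in \K_i}\tfrac{1}{2}F_k \;=\; \sum_{k\in \M} \tfrac{1}{2}\,\bigl|\{i\in\V:k\in\K_i\}\bigr|\,F_k,
\]
and then observe that, writing $\{i,j\}=\kappa_\E^{-1}(k)$, the set $\{i'\in\V:k\in\K_{i'}\}$ equals $\{i,j\}$, which has cardinality $2$, so the $\tfrac{1}{2}$ cancels to give $\sum_{k\in\M}F_k$.

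For part \ref{lem:graph_sums_equivalence4}, I would again swap the order of summation and, for fixed $k\in\M$ with $\{i,j\}=\kappa_\E^{-1}(k)$, identify the set of indices $\ell$ that pair with $E_k$ on the left-hand side. The outer sum contributes the term $E_k\circ F_\ell$ exactly when the outer index is $i$ (with $\ell\in\K_i\setminus\{k\}$) and when it is $j$ (with $\ell\in\K_j\setminus\{k\}$). The key observation, which is the only nontrivial step, is that $\K_i\cap\K_j=\{k\}$: because $\G$ is simple, there is no second edge joining $i$ and $j$, so no index other than $k$ can belong to both $\K_i$ and $\K_j$. Therefore $(\K_i\setminus\{k\})$ and $(\K_j\setminus\{k\})$ are disjoint, and their union is precisely $\Ls_k=(\K_i\cup\K_j)\setminus\{k\}$. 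The claimed equality then follows immediately.

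The main obstacle, such as it is, is the disjointness argument for part \ref{lem:graph_sums_equivalence4}; without simplicity of $\G$ one would have to carry a multiplicity factor analogous to the $\tfrac{1}{2}$ appearing in part \ref{lem:graph_sums_equivalence1}. Everything else is a routine rearrangement of finite sums, and no analytic property of the maps $E_k,F_k$ is used.
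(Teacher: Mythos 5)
Your proof is correct and follows essentially the same route as the paper's own argument: both parts are handled by interchanging the order of summation, using that each edge index $k$ belongs to exactly two sets $\K_i$ (its two endpoints), and, for part ii), that $(\K_i\setminus\{k\})$ and $(\K_j\setminus\{k\})$ are disjoint with union $\Ls_k$ because $\G$ is simple. Nothing is missing.
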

\begin{proof}
See Appendix. 
\end{proof}

\section{Networked system model and robust stability analysis}
\label{sec:ideal+IQC}

In this section, the network model and robust stability analysis from \cite{MCLinks} is recalled first. A new result is then derived to underpin the aforementioned neighbourhood decomposition, which is subsequently developed in Section~\ref{sec:nodes}. 

Consider a network of $n\in \naturals\setminus\{1\}$ dynamic agents, coupled according to the simple graph $\G=(\V,\E)$. The vertex set $\V=[1:n]$ corresponds to a fixed enumeration of the agents, and  
$m:=|\E|$ is the number of edges, defined according to $\{i,j\}\in\E$ if the output of agent $i\in\V$ is shared as an input to agent $j\in\V$, and vice-versa. It is assumed that the number of neighbours $m_i:=|\N_i|\geq 1$ for all $i\in\V$. Note that $\sum_{i\in \V} m_i=2m$. To tame the notation, each agent has a single output and dynamics corresponding to the system $H_i:\L_{2e}^{m_i}\rightarrow \L_{2e}$, which is taken to be linear and stable, with the vector input signal coordinate order fixed by the neighbourhood edge-set enumerations $\kappa_{\E_i}:\E_i\rightarrow\M_i$.

Define the block diagonal systems
\begin{subequations}
\label{eq:blkdiag}
\begin{align}
&H:=\bigoplus_{i=1}^{n}H_i:\L_{2e}^{2m}\rightarrow \L_{2e}^{n}, 
\quad 
T:=\bigoplus_{i=1}^{n} \mathbf{1}_{m_{i}\times 1}:\L_{2e}^{n}\rightarrow \L_{2e}^{2m}, \label{eq:Tdef}\\ 
&\text{ and } \quad R:=\bigoplus_{i=1}^{n} \big(\bigoplus_{k=1}^{m_i} R_{i,k}\big):\L_{2e}^{2m}\rightarrow\L_{2e}^{2m},
\end{align}
\end{subequations}
 where $\mathbf{1}_{m_{i}\times 1}:\L_{2e}\rightarrow\L_{2e}^{m_i}$ denotes pointwise multiplication by $\mathbf{1}_{m_{i}\times 1}\in\mathbb{R}^{m_{i}\times 1}$, and the system $R_{i,k}:\L_{2e}\rightarrow \L_{2e}$ represents the stable but possibly nonlinear and time-varying dynamics of the link {\em from} agent $i\in\V$ {\em to} its neighbour $j\in\N_i$ with $\{i,j\}=\kappa_{\E_i}^{-1}(k)$.  Given these components, the networked system can be modelled as the structured feedback interconnection $[\![P, R \circ T \circ H]\!]$, where $P:\L_{2e}^{2m}\rightarrow \L_{2e}^{2m}$ is pointwise multiplication by a permutation matrix arising from the structure of $\G$. More specifically, for each $i\in\V$, $k\in\M_i$, and $r\in[1:2m]$, the corresponding entry of this permutation matrix is given by
 \begin{align} \label{eq:permute}
 P_{(\sum_{h\in [1:i-1]} m_h
 +k,\,r)}\!=\!
 \begin{cases} 1 & \text{if}~\!r\!=\!
 \sum_{h\in [1:j-1]} \!m_h 
 \!+\!\kappa_{\E_j}(\!\{i,j\}\!)
 \\ &\text{with } j\in\kappa_{\E_i}^{-1}(k)\setminus \{i\},\\
 0 & \text{otherwise},
 \end{cases}
 \end{align}
 where by convention the sum over an empty index set is zero.
 See \cite{MCLinks} for more details about the model and its components.
 \begin{figure*}[t]
\hspace{0pt} 
\centering
\begin{minipage}{.43\linewidth}
\centering
\hspace{-5pt} 
\begin{tikzpicture}
 
\node [draw,
    minimum width=1.2cm,
    minimum height=1.2cm,
]  (permutation) at (0,0) {$P$};
 
\node [draw,
    minimum width=2cm, 
    minimum height=1.2cm, 
    below=.5cm  of permutation
]  (nets) {$R\circ T \circ H$};

\node[draw,
    circle,
    minimum size=0.6cm,
    left=0.8cm  of nets
] (sum){};
 
\draw (sum.north east) -- (sum.south west)
(sum.north west) -- (sum.south east);
 
\node[left=-1pt] at (sum.center){\tiny $+$};
\node[above=-1pt] at (sum.center){\tiny $+$};


\node[draw,
    circle,
    minimum size=0.6cm,
  right= 1.2cm   of permutation 
] (sum2){};
 
\draw (sum2.north east) -- (sum2.south west)
(sum2.north west) -- (sum2.south east);
 
\node[right=-1pt] at (sum2.center){\tiny $+$};
\node[below=-1pt] at (sum2.center){\tiny $+$};

\draw[-stealth] (sum.east) -- (nets.west)
    node[midway,above]{$v$};
 
\draw[-stealth] (permutation.west) -| (sum.north);
 
\draw [stealth-] (sum.west) -- ++(-0.5,0) 
    node[midway,above]{$d_{v}$};

\draw[-stealth]  (nets.east) -|  (sum2.south);
 
\draw[-stealth] (sum2.west) -- (permutation.east) 
    node[midway,above]{$w$};
 
\draw [stealth-] (sum2.east) -- ++(0.5,0) 
    node[midway,above]{$d_{w}$}; 
 \end{tikzpicture}
 \! \vspace{40 pt} \, 
\hspace*{1.25cm}
\begin{tikzpicture}
 
\node [draw,
    minimum width=1.4cm,
    minimum height=1.2cm,
     below=.5cm  of nets
]  (permutation2) {$H \circ P$};
 
\node [draw,
    minimum width=1.4cm, 
    minimum height=1.2cm, 
    below=.5cm  of permutation2
]  (nets) {$R\circ T$};

\node[left=1cm  of nets] (sum){};


\node[draw,
    circle,
    minimum size=0.6cm,
  right= 1cm   of permutation2 
] (sum2){};
 
\draw (sum2.north east) -- (sum2.south west)
(sum2.north west) -- (sum2.south east);
 
\node[right=-1pt] at (sum2.center){\tiny $+$};
\node[below=-1pt] at (sum2.center){\tiny $+$};

\draw[-stealth] (sum.center) -- (nets.west)
    node[midway,above]{$\tilde{v}$};
 
\draw(permutation2.west) -| (sum.center);

\draw[-stealth]  (nets.east) -|  (sum2.south);
 
\draw[-stealth] (sum2.west) -- (permutation2.east) 
    node[midway,above]{$\tilde{w}$};
 
\draw [stealth-] (sum2.east) -- ++(0.5,0) 
    node[midway,above]{$\qquad \qquad d_w+P^{-1}d_v$};
\end{tikzpicture} 
\end{minipage}
\hspace{3pt} 
\begin{minipage}{.43\linewidth}
\centering
\resizebox{6.03cm}{7.45cm}{\begin{tikzpicture}
 
\node [draw,
    minimum width=1.4cm,
    minimum height=1.2cm,
]  (permutation) at (0,0) {$H \circ P$};

\node[below=2.5cm  of permutation] (middle){};

\node [draw,
    minimum width=1.4cm, 
    minimum height=1.2cm, 
    below=5cm  of permutation
]  (nets) {$R\circ T$};

\node [draw,
    minimum width=1.0cm, 
    minimum height=1.0cm, 
    below=.3cm  of permutation
]  (T1) {$T$};

\node [draw,
    minimum width=1.0cm, 
    minimum height=1.0cm, 
    above=.3cm  of nets
]  (T2) {$T$};

\node[left=1cm  of nets] (n1){};

\node[above=1cm  of nets] (n2){};

\node[left=1cm  of n2] (n3){};

\node[left=1cm  of permutation] (n4){};

\node[right=1cm  of permutation] (n5){};

\node[left=1.2cm  of T1] (n6){};

\node[left=1.2cm  of T2] (n7){};

\node[right=1.2cm  of T1] (n8){};

\node[right=1.2cm  of T2] (n9){};

\node[left=1cm  of nets] (n10){};

\node [ 
    above=0.8cm  of n10
]  (S) {};

\node [
    below=0.8cm  of n4
]  (Sda) {};


\node[draw,
    circle,
    minimum size=0.6cm,
  right=1.0cm of T2
] (sum3){};
 
\draw (sum3.north east) -- (sum3.south west)
(sum3.north west) -- (sum3.south east);
 
\node[left=-1pt] at (sum3.center){\tiny $-$};
\node[below=-1pt] at (sum3.center){\tiny $+$};


\node[draw,
    circle,
    minimum size=0.6cm,
above=1. cm of sum3
] (sum2){};
 
\draw (sum2.north east) -- (sum2.south west)
(sum2.north west) -- (sum2.south east);
 
\node[right=-1pt] at (sum2.center){\tiny $+$};
\node[below=-1pt] at (sum2.center){\tiny $+$};


\node[draw,
    circle,
    minimum size=0.6cm,
  right=1.0cm of T1
] (sum4){};
 
\draw (sum4.north east) -- (sum4.south west)
(sum4.north west) -- (sum4.south east);
 
\node[left=-1pt] at (sum4.center){\tiny $+$};
\node[below=-1pt] at (sum4.center){\tiny $+$};

\draw[-stealth] (n1.center) -- (nets.west)
    node[midway,above]{$\tilde{v}$};
 
\draw(permutation.west) -| (Sda.center);

\draw(Sda.center) -| (S.center);

\draw(S.center) -- (n1.center);

\node[above=1.6cm of S] (ttS) {}; 
\node[left=-0.1 of ttS] {$y$};
\node[above=0.2cm of sum2] (ttsum) {};
\node[right=-0.1cm of ttsum] {$u$};

\draw[-stealth]  (nets.east) -|  (sum3.south);
 
\draw[-stealth] (n5.center) -- (permutation.east) 
    node[midway,above]{$\tilde{w}$};
    
\draw[-stealth] (n6.center) -- (T1.west);
\draw[-stealth] (n7.center) -- (T2.west);

\draw[stealth-] (sum4.west) -- (T1.east);
\draw[stealth-] (sum3.west) -- (T2.east);

\draw[-stealth] (sum2.north) -- (sum4.south);
 
\draw[-stealth] (sum3.north) -- (sum2.south);

\draw (n5.center) -- (sum4.north) 
    node[midway,above]{};
 
\draw [stealth-] (sum2.east) -- ++(0.5,0) 
    node[right,above]{$d$};

 \node[draw,dashed,inner sep=7pt, yscale=1, fit={(T1) (Sda) (permutation) (sum4)}] (G) {};

 \node[draw,dashed,inner sep=7pt, yscale=1, fit={(T2) (S) (nets) (sum3)}] (D) {};

\node[left=0.25cm of D] (D1) {$\Delta$};
\node[left=0.25cm of G] (G1) {$G$};
\node[midway,above]{$\hspace{-2.5 cm}{\vspace{0.2cm}\tilde{v}}$};
 
 
\end{tikzpicture}}
\end{minipage}
    \caption{Networked system model $[\![P,R\circ T\circ H]\!]$, and loop transformations for robust stability analysis.}
    \label{fig:transfloop}      
\end{figure*}
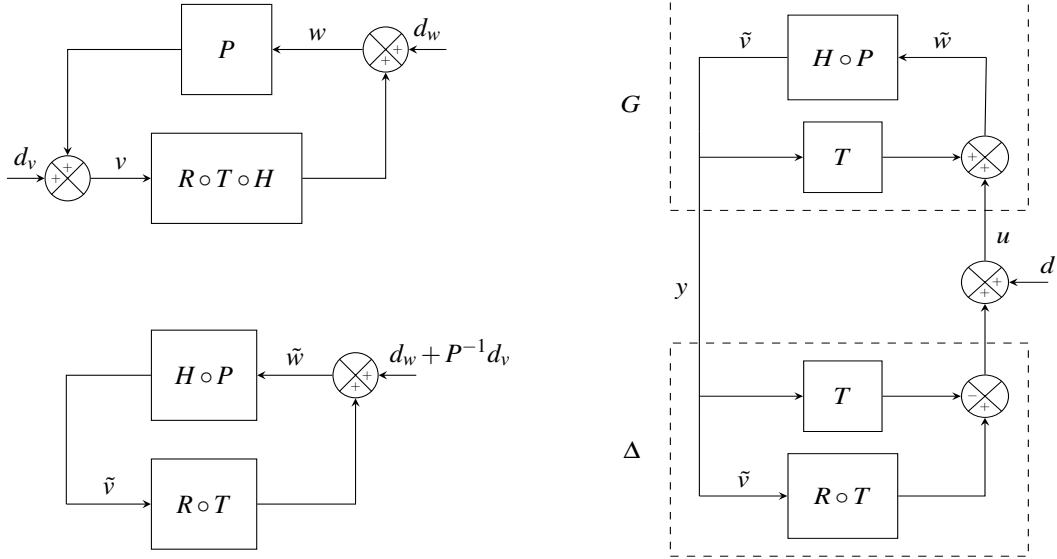

The `routing' matrix $P=P^\prime=P^{-1}$ is the adjacency matrix of an undirected $1$-regular sub-system graph 
 $\widetilde{\G}:=(\widetilde{\V},\widetilde{\E})$, with $2m$ vertices and $m$ edges, corresponding to the disjoint union of the two-vertex subgraphs $\mathcal{G}[e]$ induced by each edge $e\in\E$ in the network graph $\mathcal{G}=(\V,\E)$. As such, 
\begin{equation}
 \label{eq:Padjacency}
  P=D-L=I_{2m}-
 \sum_{k\in \M} L_k,
 \end{equation}
where the degree matrix $D=I_{2m}$ since $\widetilde{\G}$ is 1-regular, and the Laplacian decomposes as
 \begin{equation}
 \label{eq:Laplacian_Subsystem_Graph}
 L=\sum_{k\in \M}L_k
  =\sum_{k\in \M}B_{(\cdot,k)}B_{(\cdot,k)}^\prime
  =B B^\prime,   
 \end{equation}
 where $B\in\mathbb{R}^{2m\times m}$ is the incidence matrix
 defined by 
$B_{(r,k)}=1=-B_{(s,k)}$ for $\{r,s\}=\kappa^{-1}_{\widetilde{\E}}(k)$, and $B_{(t,k)}:=0$ for each $t \in [1:2m]\setminus\{r,s\}$, over $k\in\M$; the enumeration $\kappa_{\widetilde{\E}}$ is taken to be compatible with the enumerations $\kappa_{\E_i}$ and the definition of $P$ in \eqref{eq:permute}. The edge orientation is arbitrary.
 \begin{assumption}
\label{ass:coprime} 
 The network with ideal links (i.e., $R=I_{2m}$) is stable in the sense that $[\![P, T\circ H]\!]$ 
 is stable.
\end{assumption}
As detailed in \cite{MCLinks}, one can leverage Assumption~\ref{ass:coprime} in the analysis of $[\![P,R\circ T \circ H]\!]$ by considering uncertainty in the links $R$ relative to ideal unity gain links.
Indeed, as illustrated in Figure~\ref{fig:transfloop}, to verify robust stability of the networked system $[\![P,R\circ T \circ H]\!]$ it is sufficient to verify that $[\![G,\Delta]\!]$ is stable, where  
\begin{equation}
\label{eq:Delta}
    \Delta:= (R-I_{2m}) \circ T,
\end{equation}
and
\begin{equation}
\label{eq:G}
G := 
H \circ (P - T\circ H )^{-1},
\end{equation}
with the stable systems $R$, $H$, and $T$ as per \eqref{eq:blkdiag}, and $P$ as per \eqref{eq:Padjacency}. Since $(P - T\circ H )^{-1}$ is stable by Lemma~1 in \cite{MCLinks}, $G:\L_{2e}^{2m}\rightarrow \L_{2e}^{n}$, and  the uncertain $\Delta:\L_{2e}^{n}\rightarrow \L_{2e}^{2m}$, are both stable. 
\begin{theorem} (\!\!\cite[Thm.~2]{MCLinks})
\label{thm:stable_sampledH} 
Under Assumption~\ref{ass:coprime}, if 
$[\![G,\Delta]\!]$ is stable, with $G$ as per \eqref{eq:G}, and $\Delta$ as per \eqref{eq:Delta}, then the networked system model $[\![P,R\circ T \circ H]\!]$ is stable. Further, when $\Delta$ is also linear, stability of $[\![P,R\circ T \circ H]\!]$ implies stability of $[\![G,\Delta]\!]$.  
\end{theorem}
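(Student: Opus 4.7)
The plan is to exploit the loop transformations sketched in Figure~\ref{fig:transfloop}, which express $\sys{P}{R\circ T \circ H}$ and $\sys{G}{\Delta}$ as algebraic rearrangements of the same underlying equations. For the forward implication I would begin with $v=Pw+d_v$ and $w=R\circ T\circ H(v)+d_w$, introduce $y:=H(v)$ and $u:=\Delta(y)=(R-I_{2m})\circ T(y)$, and use $R\circ T(y)=T(y)+u$ together with $P^{-1}=P$ to obtain $(P-T\circ H)v=u+d_w+Pd_v$. Stability of $(P-T\circ H)^{-1}$ from Lemma~1 of~\cite{MCLinks} then yields $v=(P-T\circ H)^{-1}(u+d_w+Pd_v)$ and hence $y=G(u)+G(d_w+Pd_v)$, so $(y,u)$ satisfies the $\sys{G}{\Delta}$ loop with disturbance $(d_y,d_u)=(G(d_w+Pd_v),0)$. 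Assumed stability of $\sys{G}{\Delta}$ then bounds $\|y\|_2$ and $\|u\|_2$ by a constant multiple of $\|d_v\|_2+\|d_w\|_2$, after which $\|v\|_2$ and $\|w\|_2=\|Pv-Pd_v\|_2$ inherit bounded gains through the bounded operators involved. Well-posedness of $\sys{P}{R\circ T \circ H}$ is obtained in parallel by running the same construction at the $\L_{2e}$ level: arbitrary $(d_v,d_w)$ produce causal $\sys{G}{\Delta}$-inputs, whose well-posedness returns the unique causal $(y,u)$, and the explicit formulas above give the unique causal $(v,w)$.

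For the converse I would assume $\Delta$ is linear, which upgrades the signal-level identity $T(y)+\Delta(y)=R\circ T(y)$ to the operator-level identity $T\circ H+\Delta\circ H=R\circ T\circ H$, so that $P-T\circ H-\Delta\circ H=P-R\circ T\circ H$ and
\begin{equation*}
I_{2m}-(P-T\circ H)^{-1}\circ\Delta\circ H=(P-T\circ H)^{-1}(P-R\circ T\circ H).
\end{equation*}
Applying the push-through identity $(I-AB)^{-1}=I+A(I-BA)^{-1}B$ with $A=H$ and $B=(P-T\circ H)^{-1}\circ\Delta$, and using $P-R\circ T\circ H=P(I_{2m}-P\circ R\circ T\circ H)$ from $P^{2}=I_{2m}$, yields the key identity
\begin{equation*}
(I_n-G\circ\Delta)^{-1}=I_n+H\circ(I_{2m}-P\circ R\circ T\circ H)^{-1}\circ P\circ\Delta.
\end{equation*}
Linearity of $\Delta$ makes $\sys{P}{R\circ T \circ H}$ a linear feedback loop, whose stability forces $(I_{2m}-P\circ R\circ T\circ H)^{-1}$ to be bounded on $\L_2^{2m}$; composition with the bounded operators $H$, $P$, $\Delta$ then renders $(I_n-G\circ\Delta)^{-1}$ bounded, and substitution in the $\sys{G}{\Delta}$ loop equations delivers the desired bound on $\|(y,u)\|_2$ in terms of $\|(d_y,d_u)\|_2$, along with well-posedness at the $\L_{2e}$ level from causality of the operators involved.

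The main obstacle is the converse implication: the push-through step relies critically on $T\circ H+\Delta\circ H=R\circ T\circ H$ as an operator identity, which fails for nonlinear $\Delta$ and blocks the chain of manipulations around $(I_{2m}-(P-T\circ H)^{-1}\circ\Delta\circ H)^{-1}$. The remaining steps — verifying that every inverse is well-defined and bounded, and that causality and well-posedness transfer properly at the $\L_{2e}$ level — are routine but require care to ensure each composition and gain estimate is legitimate.
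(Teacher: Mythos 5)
Your proof is correct and follows exactly the loop-transformation route the paper sketches in Figure~\ref{fig:transfloop} (the paper itself defers the proof of this theorem to \cite{MCLinks}): the forward direction rewrites the network equations as a $\sys{G}{\Delta}$ loop via $(P-T\circ H)v=u+d_w+Pd_v$ and the stability of $(P-T\circ H)^{-1}$ under Assumption~\ref{ass:coprime}, and the converse inverts this via the push-through identity to express $(I_n-G\circ\Delta)^{-1}$ through $(I_{2m}-P\circ R\circ T\circ H)^{-1}$, whose boundedness follows from stability of the (now linear) network loop. One correction to your closing diagnosis: the identity $T\circ H+\Delta\circ H=R\circ T\circ H$ holds pointwise for arbitrary nonlinear $R$, simply because $R-I_{2m}$ denotes the map $x\mapsto R(x)-x$; what genuinely requires linearity of $\Delta$ in the converse is the push-through identity itself, since $(I-AB)^{-1}=I+A(I-BA)^{-1}B$ needs $B=(P-T\circ H)^{-1}\circ\Delta$ to be additive.
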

Suppose, for all $(y,\alpha)\in \L_{2\,}^{n}\times [0,1]$,
\begin{equation} \label{eq:DeltaIQC}
\innerp{(y,u)}{\Phi(y,u)} \geq 0, \quad u=\alpha \Delta(y),
\end{equation}
where the given bounded linear self-adjoint operator $\Phi:(\L_{2\,}^{n} \times \L_{2\,}^{2m}) \rightarrow (\L_{2\,}^{n} \times \L_{2\,}^{2m})$ is structured according to
\begin{align}
\label{eq:multip}
    \Phi = \begin{bmatrix} \Phi_1 & \Phi_2 \\ \Phi_2^* & \Phi_3 \end{bmatrix}
    =\begin{bmatrix} \bigoplus_{i=1}^n \Phi_{1,i} &  \bigoplus_{i=1}^n \Phi_{2,i} \\ \bigoplus_{i=1}^n\Phi_{2,i}^* & \bigoplus_{i=1}^n \Phi_{3,i} \end{bmatrix};
\end{align}
the superscript $*$ denotes Hilbert adjoint. For example, this IQC holds by selecting each
$$\Phi_i := 
\begin{bmatrix} \Phi_{1,i} & \Phi_{2,i} \\ \Phi_{2,i}^*
 & \Phi_{3,i} \end{bmatrix}
:\L_{2\,}\times\L_{2\,}^{m_i} \rightarrow 
\L_{2\,} \times \L_{2\,}^{m_i},\quad i\in\V,$$
such that $\innerp{(y_i,u_i)}{\Phi_i(y_i,u_i)} \geq 0$, $u_i=\alpha\Delta_i(y_i)$, for all $(y_i,\alpha)\in\L_{2}\times [0,1]$, with local $\Delta_i=(\oplus_{k=1}^{m_i}R_{i,k}-I_{m_i})\circ \mathbf{1}_{m_i\times 1}$. Then, by Theorem~\ref{thm:robust_stability}, the stability of $\sys{G}{\Delta}$ is implied by the existence of $\epsilon>0$ such that for all $z\in \L_{2\,}^{2m}$,
\begin{align}
\label{eq:stabiliy_G_2}
        \innerp{\begin{bmatrix}
	  N \\
		M 
        \end{bmatrix}z}{\begin{bmatrix} \Phi_1 & \Phi_2 \\ \Phi_2^* & \Phi_3 \end{bmatrix}  
        \begin{bmatrix}
	  N \\
		M 
        \end{bmatrix}z}
         \leq -\epsilon \| z \|_2^2,
\end{align} 
where
\begin{gather}
    \label{eq:coprimeNM}
    N:= H, \quad \mathrm{and} \quad    M:=(P-T\circ H)=I_{2m}-L-T\circ H,
\end{gather}
with $T$, $H$, and $P$, as per \eqref{eq:blkdiag} and \eqref{eq:permute}.
Importantly, $N$ and $M$ are structured coprime factors of $G=NM^{-1}$, 
as discussed in \cite[Rem.~4]{MCLinks}. Also refer to \cite[Sec.~IV]{MCLinks} for more on the use of Theorem~\ref{thm:robust_stability} to arrive at \eqref{eq:stabiliy_G_2}, and discussion of the obstacle to direct application of the decomposition method proposed in~\cite{pates2016scalable} to the 
equivalent condition
\begin{align}
  \innerp{\begin{bmatrix}
	  I_{2m}\\
		L \\
        \end{bmatrix}z}{\left[ \begin{array}{cc} \Xi_1 & \Xi_2 \\
    	\Xi_2^{*}& \Xi_3\end{array} \right] 
        \begin{bmatrix}
	  I_{2m}\\
		L \\
        \end{bmatrix}z} \leq -\epsilon \| z \|_2^2,
       \label{eq:stabiliy_G_3b}
\end{align}
where $L=\sum_{k\in [1:m]} L_k$ is the sub-system graph Laplacian in~\eqref{eq:Laplacian_Subsystem_Graph},
\begin{subequations}
    \label{t4Pi}
    \begin{align}
    \Xi_{1}&:= N^*\Phi_{1}N+ N^*\Phi_2 J+J^*\Phi_2^*N+J^*\Phi_3 J~,\\
    \Xi_{2}&:=-N^*\Phi_2-J^*\Phi_3~,\\
    \Xi_{3}&:=\Phi_3~,
\end{align}
and $J:= I_{2m}-T\circ H$.
\end{subequations}
The block-diagonal structure of $N$, $J$, $\Phi_1$, $\Phi_2$, $\Phi_3$, and the Hilbert adjoints when restricted to $\L_{2\,}$, is such that $\Xi_1 =  \bigoplus_{i=1}^n \Xi_{1,i}$, $\Xi_{2}= \bigoplus_{i=1}^n \Xi_{2,i}$, and $\Xi_{3} =\bigoplus_{i=1}^n \Xi_{3,i}$, where 
\begin{align*}
    \Xi_{1,i} &:= H_i^*\Phi_{1,i}H_i +(I_{m_i} - \mathbf{1}_{m_i}H_i)^*\Phi_{3,i} (I_{m_i} - \mathbf{1}_{m_i}H_i) \\
    &\quad + H_i^*\Phi_{2,i}(I_{m_i} - \mathbf{1}_{m_i}H_i) +(I_{m_i} - \mathbf{1}_{m_i}H_i)^*\Phi_{2,i}^*H_i~,\\
    \Xi_{2,i}&:=-H_i^*\Phi_{2,i}-(I_{m_i}-\mathbf{1}_{m_i}H_i)^*\Phi_{3,i}~,\\
    \Xi_{3,i}&:=\Phi_{3,i}~.
\end{align*}
For context, an existing
link-wise
decomposition of \eqref{eq:stabiliy_G_3b} is
first recalled from~\cite{MCLinks}.
\begin{lemma} (\!\!\cite[Lem.~3]{MCLinks}) \label{lem:PatesReadapted} 
Let $W=\sum_{k\in \M} W_k \in \mathbb{R}^{2m\times 2m}$ be such that $W_k\succeq 0$ and $W\succ 0$.
Suppose there exist $X_k\!=\!X^*_k:\L_{2\,}^{2m}\rightarrow\L_{2\,}^{2m}$,  $Z_k\!=\!Z_k^*:\L_{2\,}^{2m}\rightarrow\L_{2\,}^{2m}$, and $\epsilon_k>0$, $k\in \M$, such that for all $z\in\L_{2\,}^{2m}$,
\begin{align}
    \label{ISCk}
       & \innerp{\begin{bmatrix}
		I_{2m} \\
		L_k\\
		\end{bmatrix} z} {\begin{bmatrix}
		X_k+\epsilon_k W_k & \Xi_{2} \\
    	\Xi_{2}^{*} & Z_k \end{bmatrix}
     \begin{bmatrix}
		I_{2m} \\
		L_k\\
		\end{bmatrix}z}\leq 0,~ k\in\M,\\
 \label{ISCCompa}
    &       \innerp{z}{\Xi_1 z} - \sum_{k\in \M} \innerp{z}{X_k z} \leq 0,\\
\label{ISCCompb}
      &     \innerp{L z}{\Xi_3 L z} - \sum_{k\in \M} \innerp{L_k z}{Z_k L_k z} \leq 0,
\end{align}  
with $L_k$ as per~\eqref{eq:Laplacian_Subsystem_Graph}.
 Then, there exists $\epsilon>0$ such that \eqref{eq:stabiliy_G_3b} for all $z\in\L_{2\,}^{2m}$.
 \end{lemma}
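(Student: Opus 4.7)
The plan is to sum the $m$ local inequalities \eqref{ISCk} over $k \in \M$ and combine the result with the two aggregation inequalities \eqref{ISCCompa} and \eqref{ISCCompb}. The key structural fact that makes this work is the Laplacian decomposition $L = \sum_{k \in \M} L_k$ from \eqref{eq:Laplacian_Subsystem_Graph}, which causes the local $\Xi_2 L_k$ and $L_k \Xi_2^*$ cross terms to collapse into the $\Xi_2 L$ and $L\Xi_2^*$ cross terms appearing in \eqref{eq:stabiliy_G_3b}. The $\epsilon_k W_k$ terms left over from summing \eqref{ISCk} supply the strict margin, and $W \succ 0$ turns the aggregate weight into a multiple of $\|z\|_2^2$.

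Concretely, I would first expand \eqref{ISCk}, which reads
\begin{equation*}
\innerp{z}{X_k z} + \epsilon_k \innerp{z}{W_k z} + \innerp{z}{\Xi_2 L_k z} + \innerp{L_k z}{\Xi_2^* z} + \innerp{L_k z}{Z_k L_k z} \leq 0,
\end{equation*}
for each $k \in \M$. Summing over $k$ and using linearity of $\Xi_2$ together with $\sum_{k \in \M} L_k = L$, then substituting the bounds $\innerp{z}{\Xi_1 z} \leq \sum_k \innerp{z}{X_k z}$ and $\innerp{L z}{\Xi_3 L z} \leq \sum_k \innerp{L_k z}{Z_k L_k z}$ from \eqref{ISCCompa} and \eqref{ISCCompb}, yields
\begin{equation*}
\innerp{z}{\Xi_1 z} + \innerp{z}{\Xi_2 L z} + \innerp{L z}{\Xi_2^* z} + \innerp{L z}{\Xi_3 L z} \leq -\sum_{k \in \M} \epsilon_k \innerp{z}{W_k z},
\end{equation*}
whose left-hand side is precisely the quadratic form appearing on the left of \eqref{eq:stabiliy_G_3b}.

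Finally, to extract a uniform margin, set $\underline{\epsilon} := \min_{k \in \M} \epsilon_k > 0$. Because each $W_k \succeq 0$, one has $\sum_k \epsilon_k \innerp{z}{W_k z} \geq \underline{\epsilon}\, \innerp{z}{W z}$, and since $W \succ 0$ acts by pointwise multiplication, $\innerp{z}{W z} \geq \lambda_{\min}(W)\|z\|_2^2$ with $\lambda_{\min}(W) > 0$. Choosing $\epsilon := \underline{\epsilon}\, \lambda_{\min}(W)$ then delivers \eqref{eq:stabiliy_G_3b}. I do not anticipate any serious obstacle: the argument is essentially a weighted linear combination of the hypotheses, and the telescoping of the cross terms is immediate from linearity of $\Xi_2$ together with the additive Laplacian decomposition \eqref{eq:Laplacian_Subsystem_Graph}; no structural properties of $H$, $P$, or the block form of $\Phi$ in \eqref{eq:multip} are used beyond what is already encoded in $\Xi_1,\Xi_2,\Xi_3$.
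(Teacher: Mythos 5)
Your proof is correct and is essentially the argument the paper itself uses: this lemma's proof is deferred to \cite{MCLinks}, but the paper's proof of the analogous neighbourhood version (Lemma~\ref{lem:PVRNodes}) proceeds exactly as you do — sum the local inequalities, substitute the two aggregation bounds, and extract the uniform margin via $\sum_k \epsilon_k W_k \succeq (\min_k\epsilon_k) W \succ 0$. Your observation that the cross terms telescope automatically here (because $\sum_{k\in\M}L_k = L$, so no counterpart of \eqref{eq:PVRGenNodes_IPY} is needed) is also exactly the point the paper makes when contrasting the two decompositions.
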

The $m=|\E|$ conditions in \eqref{ISCk} 
enable use of the structure of each $L_k=B_{(\cdot,k)}B^\prime_{(\cdot,k)}$ for link-wise decentralized verification of \eqref{eq:stabiliy_G_3b}, although this can be conservative. 
An alternative is to decompose according to the broader structure of each
 \begin{equation}
 \label{eq:collection_Laplacian_nodes}
 K_i=\sum_{k\in\K_i} L_k, \quad i\in \V,
 \end{equation}
 which encompasses the neighborhood of agent $i$.
\begin{lemma}\label{lem:PVRNodes}
Let $W=\sum_{i\in \V} W_i \in \mathbb{R}^{2m\times 2m}$ be such that $W_i\succeq 0$ and $W\succ 0$.
Suppose there exist $X_i\!=\!X^*_i:\L_{2\,}^{2m}\rightarrow\L_{2\,}^{2m}$, $Y_i:\L_{2}^{2m}\rightarrow\L_{2}^{2m}$, $Z_i\!=\!Z_i^*:\L_{2\,}^{2m}\rightarrow\L_{2\,}^{2m}$, and $\epsilon_i>0$, $i\in \V$, such that for all $z\in\L_{2\,}^{2m}$,
\begin{subequations}
\label{eq:node_generic}
\begin{align}
    \label{eq:PVRNodes_IPk}
       & \innerp{\begin{bmatrix}
		I_{2m} \\
		K_i\\
		\end{bmatrix} z} {\begin{bmatrix}
		X_i+\epsilon_i W_i & Y_{i} \\
    	Y_{i}^{*} & Z_i \end{bmatrix}
     \begin{bmatrix}
		I_{2m} \\
		K_i\\
		\end{bmatrix}z}\leq 0,~~ i\in\V,\\
 \label{eq:PVRGenNodes_IPX}
      &     \innerp{z}{\Xi_1 z} - \sum_{i\in \V}\innerp{z}{X_i z} \leq 0,\\
\label{eq:PVRGenNodes_IPY}
      &     \innerp{z}{\Xi_2 L z} - \sum_{i\in \V}\innerp{z}{Y_i K_i z} \leq 0,\\    
\label{eq:PVRGenNodes_IPZ}
      &     \innerp{L z}{\Xi_3 L z} - \sum_{i\in \V}\innerp{K_i z}{Z_i K_i z} \leq 0,
\end{align}   
\end{subequations}
with $K_i$ as per \eqref{eq:collection_Laplacian_nodes}.
 Then, there exists $\epsilon\in\mathbb{R}_{>0}$ such that \eqref{eq:stabiliy_G_3b} for all $z\in\L_{2\,}^{2m}$.
 \end{lemma}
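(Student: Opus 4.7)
The plan is to recover \eqref{eq:stabiliy_G_3b} by summing the $n$ neighbourhood inequalities \eqref{eq:PVRNodes_IPk} over $i\in\V$ and then invoking the three aggregation bounds \eqref{eq:PVRGenNodes_IPX}--\eqref{eq:PVRGenNodes_IPZ}. The argument parallels the edge-wise derivation of Lemma~\ref{lem:PatesReadapted}, with $L_k$ replaced by the broader operator $K_i=\sum_{k\in\K_i}L_k$ and with the off-diagonal block permitted to vary with $i$. No appeal to the graph identity Lemma~\ref{lem:graph_sums_equivalence} is needed, since the aggregation bounds are assumed as hypotheses.

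First I would expand the quadratic form in \eqref{eq:PVRNodes_IPk} as a $2\times 2$ block to yield, for each $i\in\V$ and $z\in\L_{2\,}^{2m}$,
\begin{align*}
\innerp{z}{(X_i+\epsilon_i W_i)z} + \innerp{z}{Y_i K_i z} + \innerp{K_i z}{Y_i^* z} + \innerp{K_i z}{Z_i K_i z} \leq 0.
\end{align*}
On the real Hilbert space $\L_{2\,}^{2m}$ one has $\innerp{K_i z}{Y_i^* z}=\innerp{Y_i K_i z}{z}=\innerp{z}{Y_i K_i z}$, so summing over $i\in\V$ and rearranging gives
\begin{align*}
\sum_{i\in\V}\innerp{z}{X_i z} + 2\sum_{i\in\V}\innerp{z}{Y_i K_i z} + \sum_{i\in\V}\innerp{K_i z}{Z_i K_i z} \leq -\sum_{i\in\V}\epsilon_i\innerp{z}{W_i z}.
\end{align*}

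Next I would substitute the three bounds \eqref{eq:PVRGenNodes_IPX}, \eqref{eq:PVRGenNodes_IPY}, and \eqref{eq:PVRGenNodes_IPZ} into the corresponding groups on the left-hand side (doubling the $\Xi_2$-bound), and apply the identity $\innerp{L z}{\Xi_2^* z}=\innerp{z}{\Xi_2 L z}$ to recognise the total as the fully expanded quadratic form appearing in \eqref{eq:stabiliy_G_3b}. This yields
\begin{align*}
\innerp{\begin{bmatrix} I_{2m}\\ L\end{bmatrix} z}{\begin{bmatrix} \Xi_1 & \Xi_2\\ \Xi_2^* & \Xi_3\end{bmatrix}\begin{bmatrix} I_{2m}\\ L\end{bmatrix} z} \leq -\sum_{i\in\V}\epsilon_i \innerp{z}{W_i z}.
\end{align*}

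Finally, since $W=\sum_{i\in\V}W_i\succ 0$ and each $W_i\succeq 0$, setting $\bar\epsilon:=\min_{i\in\V}\epsilon_i>0$ gives $\sum_{i\in\V}\epsilon_i W_i\succeq \bar\epsilon W\succeq \bar\epsilon\lambda\, I_{2m}$ for $\lambda>0$ equal to the smallest eigenvalue of $W$, so \eqref{eq:stabiliy_G_3b} holds with $\epsilon:=\bar\epsilon\lambda$. I do not expect any step to be deep; the only mild care required is the symmetric treatment of the non-self-adjoint cross-term $Y_i$ when summing the per-vertex inequalities, which forces the factor $2$ on the $Y_i K_i$-term and subsequently on the $\Xi_2$-bound. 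The interest of the lemma lies in the extra flexibility of the neighbourhood decomposition $K_i$ over the link-wise $L_k$, rather than in the structure of the proof itself.
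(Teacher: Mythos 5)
Your proposal is correct and follows essentially the same route as the paper: expand the block quadratic form in \eqref{eq:PVRNodes_IPk}, sum over $i\in\V$, substitute the three aggregation bounds \eqref{eq:PVRGenNodes_IPX}--\eqref{eq:PVRGenNodes_IPZ}, and extract $\epsilon>0$ from $W=\sum_{i\in\V}W_i\succ 0$ with $W_i\succeq 0$ via $\sum_{i\in\V}\epsilon_i W_i\succeq(\min_{i\in\V}\epsilon_i)W$. The only cosmetic difference is that you merge the two cross-terms into $2\innerp{z}{Y_iK_iz}$ using the real-inner-product symmetry, whereas the paper keeps $\innerp{z}{Y_iK_iz}+\innerp{K_iz}{Y_i^*z}$ separate until the end; the content is identical.
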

 \begin{proof}
	 For all $z\in\L_{2}^{2m}$,   \eqref{eq:PVRNodes_IPk} implies
	    \begin{align*}
	        \innerp{z}{X_i z} 
         + \innerp{z}{Y_i K_i z} + \innerp{K_i z}{Y_i^* z} 
         + \innerp{K_i z}{Z_i K_i z} \leq -\epsilon_i \innerp{z}{W_i z},
	    \end{align*}
     for each $i\in \V$, and therefore, 
	    \begin{align}
	        \sum_{i\in \V}   \Bigg ( \innerp{z}{X_i z}  &+ \innerp{z}{Y_i  K_i z} +  \innerp{K_i z}{Y_i^* z} + \innerp{K_i z}{Z_i K_i z} \Bigg ) \nonumber \\ &\leq -\sum_{i\in \V} \epsilon_i \innerp{z}{W_i z} \leq - \epsilon || z ||_2^2, \label{eq:PVRGenNodes_Proof1}
	    \end{align}
      where $\epsilon = \big(\min_{i\in \V} \epsilon_i\big) \cdot \big(\min_{x\in\mathbb{R}^{n}} x^\prime W x/x^\prime x\big)>0$; note that $W=\sum_{i\in \V} W_i \succ 0$ implies $\sum_{i\in \V} \epsilon_i W_i \succeq (\min_{i\in \V}\epsilon_i) \sum_{i\in \V}  W_i\succ 0$, because each $W_i\succeq 0$.
	Combining \eqref{eq:PVRGenNodes_IPX}, \eqref{eq:PVRGenNodes_IPY}, \eqref{eq:PVRGenNodes_IPZ}, and \eqref{eq:PVRGenNodes_Proof1}, gives 
     \begin{equation}\label{ProofISC12}	\innerp{z}{\Xi_1 z} + \innerp{z}{\Xi_2 L z}  + \innerp{L z}{\Xi_2^*  z} + \innerp{L z}{\Xi_3 L z}  \leq -\epsilon || z ||_2^2,
 \end{equation}
 as claimed.
 \end{proof}
As with~\cite[Lem. 1]{MCLinks}, the proof of Lemma~\ref{lem:PVRNodes} expands upon ideas from the proof of~\cite[Thm.~1]{pates2016scalable}, 
which is not directly applicable here for the reasons elaborated in~\cite[Rem.~4]{MCLinks}. Considering $K_i$ as per Lemma~\ref{lem:PVRNodes}, instead of $L_k$ as per Lemma~\ref{lem:PatesReadapted}, makes each instance of \eqref{eq:PVRNodes_IPk} depend on more network model data, which provides scope for reducing the conservativeness of Lemma~\ref{lem:PatesReadapted}. This comes at the cost of the additional inequality $\eqref{eq:PVRGenNodes_IPY}$,
which has no counterpart in Lemma~\ref{lem:PatesReadapted} since $\sum_{k\in\M} L_k = L$, whereas $\sum_{i\in \V} K_i \neq L$. 

\section{Main result: Neighbourhood conditions}
\label{sec:nodes}
A possible selection of $W_i$, $X_i$, $Y_i$, $Z_i$, $i\in \V$, is devised for Lemma~\ref{lem:PVRNodes}. It yields a decentralized collection of conditions 
that together imply the stability of $\sys{G}{\Delta}$, 
and thus, stability of the network by Theorem~\ref{thm:stable_sampledH}.
The conditions are 
decentralized 
in the sense that each depends on model data that is local to the neighbourhood of a specific agent, including corresponding components of the IQC based uncertainty description of the local links.

The subsequent matrix definitions, and related properties, lead to the proposed selection of $W_i$, $X_i$ $Y_i$ and $Z_i$ in Lemma~\ref{lem:PVRNodes}. The definitions pertain to the structure of the networked system model, encoded by the network graph $\G=(\V,\E)$ and corresponding sub-system graph $\widetilde{\G}=(\widetilde{\V},\widetilde{\E})$, with $\V=[1:n]$, $m=|\E|=|\widetilde{\E}|$,
$|\widetilde{\V}|=2m=\sum_{i\in\V}m_i$, and $m_i=|\N_i|$, as per Section~\ref{sec:ideal+IQC}. First, for each $k\in\M=[1:m]$, define
\begin{align} \label{eq:whatB}
    A_k := (\mathrm{diag}(B_{(\cdot,k)}))^2 \in \mathbb{R}^{2m\times 2m}, 
\end{align}
where $B\in\mathbb{R}^{2m\times 2m}$ is the incidence matrix of the sub-system graph Laplacian matrix $L=\sum_{k\in\M} L_k = \sum_{k\in \M} B_{(\cdot,k)}B_{(\cdot,k)}^\prime$ in \eqref{eq:Laplacian_Subsystem_Graph}. The matrix $A_k$ is diagonal, with $\{0,1\}$ entries, and 
$$(A_k)_{(r,r)}=1 ~\text{ if and only if } ~ r\in\kappa^{-1}_{\widetilde{\E}}(k);$$
i.e., the value is $1$ only in the two locations corresponding to the sub-systems in $\widetilde{\V}$ associated with link $k$,
as per the
definition of the incidence matrix $B$ below \eqref{eq:Laplacian_Subsystem_Graph}. Indeed, since $\widetilde{\G}=(\widetilde{\V},\widetilde{\E})$ is $1$-regular, direct calculation gives $A_kB_{(\cdot,k)}=B_{(\cdot,k)}$, $B_{(\cdot,k)}^\prime A_k = B_{(\cdot,k)}^\prime$, $A_k B_{(\cdot,\ell)}=\mathbf{0}_{2m}$, $ B_{(\cdot,\ell)}^\prime A_k=\mathbf{0}_{2m}^\prime$,
\begin{subequations}
    \label{eq:BhatMagic}
    \begin{align}
        A_k L_k = (\mathrm{diag}(B_{(\cdot,k)}))^2 B_{(\cdot,k)} B_{(\cdot,k)}^\prime = 
        L_k = L_k^\prime = L_k A_k \label{eq:BhatmagicA}\\
       \text{ and }~  A_k L_\ell = (\mathrm{diag}(B_{(\cdot,k)}))^2 B_{(\cdot,\ell)} B_{(\cdot,\ell)}^\prime = O_{2m} = L_\ell A_k \label{eq:BhatMagicB}
    \end{align}
    for all $k \neq \ell \in \M$. As such, given $i\in\V$, for $k\in\K_i$,
    \begin{align}
        (\sum_{\ell\in\K_i} L_\ell) \,A_k &= L_k =
        A_k (\sum_{\ell\in\K_i} L_\ell),  \text{ and } \label{eq:BhatMagicC}\\
     ( \sum_{k\in\K_i} A_k ) ( \sum_{\ell\in\K_i} L_\ell ) &= \sum_{k\in\K_i} L_k 
        = ( \sum_{\ell\in\K_i} L_\ell )  ( \sum_{k\in\K_i} A_k ),
        \label{eq:BhatMagicD} 
    \end{align}
     where the set $\K_i$ of edge indexes associated with agent $i$ is defined as per Section~\ref{sec:graphs}.
\end{subequations}
Finally, 
for each $i\in\V$, define 
 \begin{align} \label{eq:C}
    C_i:=\diag(T_{(\cdot,i)}) \in \mathbb{R}^{2m\times 2m},
 \end{align}
where 
$T$ is given in \eqref{eq:Tdef}. As such,
$C_i=C_i^\prime$ is a diagonal matrix with $\{0,1\}$ entries. 
Composing it with $\Xi_1$ in \eqref{t4Pi}
isolates only the model data related to the agent $i\in\V$.
It can be shown by direct calculation that 
 \begin{align}
 \label{eq:propC}
  C_i \Xi_1= \Xi_1 C_i, 
  \quad 
  \text{ and }
  \quad 
  \sum_{i \in \V} C_i=I_{2m}.
 \end{align}
The following is used to prove the subsequent main result.
\begin{lemma}
\label{lem:prop_L}
    For arbitrary linear $\varGamma = \oplus_{i=1}^n \varGamma_i$, $\varGamma_i:\L_2^{m_i}\rightarrow\L_2^{m_i}$,
    and the sub-system graph Laplacian $L=\sum_{k\in\M} L_k$ in~\eqref{eq:Laplacian_Subsystem_Graph}, 
    \begin{equation*}
       L \, \varGamma L = \sum_{k\in\M} L_k\,  \varGamma L_k + \sum_{k\in\M}  \sum_{\ell \in \Ls_k} L_k\, \varGamma L_\ell ,
    \end{equation*}
    where $\Ls_k$ collects the edge indexes associated with the two agents linked by edge $k$, excluding the latter, as per the definition in Section~\ref{sec:graphs}.
\end{lemma}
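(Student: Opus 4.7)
The plan is to expand both copies of $L=\sum_{k\in\M} L_k$ in $L\,\varGamma L$ and separate the resulting double sum into diagonal ($k=\ell$) and off-diagonal ($k\neq\ell$) contributions; the identity then reduces to verifying that off-diagonal terms with $\ell\notin\Ls_k$ vanish, thanks to the agent-block structure of $\varGamma$.

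First, I would write
\begin{equation*}
L\,\varGamma L = \sum_{k\in\M}\sum_{\ell\in\M} L_k\,\varGamma L_\ell = \sum_{k\in\M} L_k\,\varGamma L_k + \sum_{k\in\M}\sum_{\ell\in\M\setminus\{k\}} L_k\,\varGamma L_\ell,
\end{equation*}
and for each off-diagonal term use $L_k = B_{(\cdot,k)}B_{(\cdot,k)}^\prime$ to factor $L_k\,\varGamma L_\ell = B_{(\cdot,k)}\bigl(B_{(\cdot,k)}^\prime\,\varGamma\, B_{(\cdot,\ell)}\bigr)B_{(\cdot,\ell)}^\prime$. Since $\varGamma=\bigoplus_{i=1}^n\varGamma_i$ is block-diagonal with respect to the agent partition of the $2m$ coordinates, and the projections $C_i=\diag(T_{(\cdot,i)})$ from \eqref{eq:C} have disjoint $\{0,1\}$-diagonals summing to $I_{2m}$ by \eqref{eq:propC}, one has $C_iC_j=0$ for $i\neq j$ and $C_i\varGamma=\varGamma C_i$, so that $\varGamma=\sum_{i\in\V} C_i\,\varGamma\, C_i$. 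Inserting this expansion yields
\begin{equation*}
B_{(\cdot,k)}^\prime\,\varGamma\, B_{(\cdot,\ell)} = \sum_{i\in\V}(C_i B_{(\cdot,k)})^\prime\,\varGamma\,(C_i B_{(\cdot,\ell)}).
\end{equation*}
Because the two nonzero entries of $B_{(\cdot,k)}$ are located at precisely the sub-system vertices owned by the two agent endpoints $\{i,j\}=\kappa_\E^{-1}(k)$ of edge $k$, the vector $C_i B_{(\cdot,k)}$ is zero unless $i$ is an endpoint of $k$. Hence the only agents $i$ that contribute to the sum are those incident to both $k$ and $\ell$, so $L_k\,\varGamma L_\ell = 0$ whenever $k\neq\ell$ share no endpoint in $\G$.

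To close the argument, I would recall from Section~\ref{sec:graphs} that $\Ls_k=(\K_i\cup\K_j)\setminus\{k\}$ with $\{i,j\}=\kappa_\E^{-1}(k)$ collects exactly those $\ell\neq k$ incident to an endpoint of $k$. Therefore the off-diagonal double sum collapses to $\sum_{k\in\M}\sum_{\ell\in\Ls_k} L_k\,\varGamma L_\ell$, and adding back the diagonal contribution gives the claimed identity. The main obstacle is the bookkeeping linking the agent-block structure of $\varGamma$ to the edge-indexed supports of the $B_{(\cdot,k)}$'s; once the decomposition $\varGamma=\sum_{i\in\V}C_i\,\varGamma\, C_i$ and the meaning of the two nonzero rows of $B_{(\cdot,k)}$ are in place, the vanishing of the non-neighbourly cross terms is immediate.
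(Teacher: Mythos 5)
Your proposal is correct and follows essentially the same route as the paper: expand $L\,\varGamma L$ into the double sum $\sum_{k}\sum_{\ell}L_k\,\varGamma L_\ell$ and show that the cross terms with $\ell\notin\Ls_k\cup\{k\}$ vanish because the agent-block structure of $\varGamma$ forces $B_{(\cdot,k)}^\prime\,\varGamma\,B_{(\cdot,\ell)}=0$ when edges $k$ and $\ell$ share no endpoint. The paper establishes this vanishing by inspecting the support of $\varGamma B_{(\cdot,\ell)}$ directly rather than via the decomposition $\varGamma=\sum_{i\in\V}C_i\,\varGamma\,C_i$, but this is only a cosmetic difference in bookkeeping.
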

\begin{proof}
See Appendix.
\end{proof}

\begin{proposition}\label{prop:sharing_nodes}
For each $i\in\V$, let 
\begin{subequations}
\label{eq:nodes_selection1}
 \begin{align}
    W_i &:= C_i~, \label{eq:t6P0}\\
    X_i &:= C_i \Xi_1 = \Xi_1 C_i~, \label{eq:t6P1}\\
    Y_i &:=  \sum_{k\in \K_i}  \tfrac{1}{2} \Xi_2 A_k~, \label{eq:t6P2}\\
    Z_i &:= 
      \sum_{k\in \K_i} \tfrac{1}{2}  A_k \Xi_3 A_k + \sum_{k\in \K_i}~\sum_{\ell\in \K_i \setminus \{k\}}A_k \Xi_3 A_{\ell}~, \label{eq:t6P3}
\end{align}   
\end{subequations}
with $A_k$ as per \eqref{eq:whatB}, $C_i$ as per \eqref{eq:C}, and the linear block diagonal $\Xi_1$, $\Xi_2$, $\Xi_3$ as per \eqref{t4Pi}.
If for all $i\in\V$, there exists $\epsilon_i\in\mathbb{R}_{>0}$ such that for all $z\in\L_{2\,}^{2m}$,
  \begin{equation}\label{t6k2} 
        \innerp{\begin{bmatrix}
		I_{2m} \\
		K_i\\
		\end{bmatrix}z}{\left[ \begin{array}{cc} X_i+\epsilon_i W_i & Y_i \\
    	Y^{*}_i & Z_i \end{array} \right]\begin{bmatrix}
		I_{2m} \\
		K_i\\
		\end{bmatrix} z}\leq0,
    \end{equation}
     with $K_i$ as per \eqref{eq:collection_Laplacian_nodes},
 then there exists $\varepsilon\in\mathbb{R}_{>0}$ such that \eqref{eq:stabiliy_G_3b} for all $z\in\L_{2\,}^{2m}$.
\end{proposition}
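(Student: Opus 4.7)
The plan is to apply Lemma~\ref{lem:PVRNodes} directly to the specific choices in \eqref{eq:nodes_selection1}. Hypothesis \eqref{eq:PVRNodes_IPk} is exactly the standing assumption \eqref{t6k2}, so what remains is to verify the positivity requirements on $W_i$ and $W$, together with the three compatibility inequalities \eqref{eq:PVRGenNodes_IPX}, \eqref{eq:PVRGenNodes_IPY}, \eqref{eq:PVRGenNodes_IPZ}. The punchline I am aiming for is that each of these three inequalities holds with \emph{equality}, which, while clean, is driven by a careful combinatorial bookkeeping argument.

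For the positivity: each $W_i = C_i$ is a diagonal $\{0,1\}$ matrix, hence $W_i\succeq 0$, and by \eqref{eq:propC}, $W = \sum_{i\in\V}C_i = I_{2m}\succ 0$. For \eqref{eq:PVRGenNodes_IPX}, I would simply observe that $\sum_{i\in\V}X_i = \big(\sum_{i\in\V}C_i\big)\Xi_1 = \Xi_1$, which yields equality. For \eqref{eq:PVRGenNodes_IPY}, I would use the commutation identity $A_k K_i = L_k$ for $k\in\K_i$ from \eqref{eq:BhatMagicC} to obtain
\begin{equation*}
Y_i K_i = \tfrac{1}{2}\sum_{k\in\K_i}\Xi_2 A_k K_i = \tfrac{1}{2}\sum_{k\in\K_i}\Xi_2 L_k,
\end{equation*}
and then invoke part~\ref{lem:graph_sums_equivalence1}) of Lemma~\ref{lem:graph_sums_equivalence} to collapse $\sum_{i\in\V} Y_i K_i = \sum_{k\in\M}\Xi_2 L_k = \Xi_2 L$, giving equality again.

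The substantive step is \eqref{eq:PVRGenNodes_IPZ}. Using the symmetric identities $K_i A_k = L_k = A_k K_i$ for $k\in\K_i$ from \eqref{eq:BhatMagicC}, I would expand
\begin{equation*}
K_i Z_i K_i = \tfrac{1}{2}\sum_{k\in\K_i} L_k \Xi_3 L_k + \sum_{k\in\K_i}\sum_{\ell\in\K_i\setminus\{k\}} L_k \Xi_3 L_\ell,
\end{equation*}
then sum over $i\in\V$ and apply both parts of Lemma~\ref{lem:graph_sums_equivalence} (with $F_k = L_k\Xi_3 L_k$ for the diagonal contribution and $E_k = L_k\Xi_3$, $F_\ell = L_\ell$ for the off-diagonal one) to arrive at
\begin{equation*}
\sum_{i\in\V} K_i Z_i K_i = \sum_{k\in\M} L_k \Xi_3 L_k + \sum_{k\in\M}\sum_{\ell\in\Ls_k} L_k \Xi_3 L_\ell.
\end{equation*}
Since $\Xi_3 = \bigoplus_{i=1}^n\Xi_{3,i}$ is block diagonal, Lemma~\ref{lem:prop_L} applies and identifies the right-hand side as $L\Xi_3 L$, so \eqref{eq:PVRGenNodes_IPZ} again holds with equality. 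All hypotheses of Lemma~\ref{lem:PVRNodes} are thereby in place, yielding the existence of $\varepsilon>0$ such that \eqref{eq:stabiliy_G_3b} holds for all $z\in\L_{2\,}^{2m}$.

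The main obstacle—and evidently the reason for the specific coefficients in \eqref{eq:nodes_selection1}—is precisely this last step: the cross-term pattern $A_k\Xi_3 A_\ell$ over $\ell\in\K_i\setminus\{k\}$ is chosen so that summing over agent-centric neighborhoods $\K_i$ reproduces exactly the edge-centric sum $\sum_{\ell\in\Ls_k}$ that appears in Lemma~\ref{lem:prop_L}, while the factors of $\tfrac{1}{2}$ on the diagonal terms of $Y_i$ and $Z_i$ absorb the double counting arising from each edge lying in the neighborhoods of both of its incident agents. Any other scaling would fail to reconstitute $\Xi_2 L$ and $L\Xi_3 L$ exactly, forcing one to settle for an inequality whose slack would have to be absorbed elsewhere.
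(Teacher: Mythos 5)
Your proof is correct and follows essentially the same route as the paper's: you verify the hypotheses of Lemma~\ref{lem:PVRNodes} for the selections \eqref{eq:nodes_selection1}, using the identities in \eqref{eq:BhatMagic} to replace $A_k K_i$ and $K_i A_k$ by $L_k$, then both parts of Lemma~\ref{lem:graph_sums_equivalence} and Lemma~\ref{lem:prop_L} to show that \eqref{eq:PVRGenNodes_IPX}--\eqref{eq:PVRGenNodes_IPZ} hold with equality. The only cosmetic difference is that you invoke \eqref{eq:BhatMagicC} termwise for the $Y_iK_i$ computation where the paper uses the aggregated identity \eqref{eq:BhatMagicD}; these are equivalent.
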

\begin{proof}
With $Z_i$ as per \eqref{eq:t6P3}, 
\allowdisplaybreaks
\begin{align*}
&\sum_{i\in \V} \innerp{K_i z}{Z_i K_i z}
 \\ 
&=\sum_{i\in \V}  \left \langle  z, K_i \left ( \sum_{k\in \K_i} \tfrac{1}{2} A_k \Xi_3 A_k +\sum_{k\in \K_i}\sum_{\ell \in \K_i \setminus \{k\}}A_k \Xi_3 A_{\ell} \right ) K_i z \right \rangle\\
&=\sum_{i\in \V}  \left \langle  z, \left(
 \sum_{k\in \K_i}  \tfrac{1}{2} L_k  \Xi_3 L_k  + \sum_{k\in \K_i}\sum_{\ell \in \K_i \setminus \{k\}} L_k \Xi_3 L_{\ell} 
\right)z
\right \rangle\\
&= \left \langle  z, \left(
\sum_{i\in \V}  \sum_{k\in \K_i} \tfrac{1}{2} L_k  \Xi_3 L_k  + \sum_{i\in \V}  \sum_{k\in \K_i}\sum_{\ell \in \K_i \setminus \{k\}} L_k \Xi_3 L_{\ell}  \right) z
\right \rangle\\
&= \left \langle  z, \left( \sum_{k\in \M} L_k  \Xi_3 L_k  + \sum_{k\in \M} \sum_{\ell \in \Ls_k} L_k \Xi_3 L_\ell \right) z \right \rangle
=
\innerp{L z}{\Xi_3 L z},
\end{align*}
which implies \eqref{eq:PVRGenNodes_IPZ}. The second equality holds by
the definition of $K_i$ in \eqref{eq:collection_Laplacian_nodes}, 
and the identity
\eqref{eq:BhatMagicC}, whereby $(\sum_{l\in\K_i} L_l) A_k = L_k$ and $ A_\ell (\sum_{l\in\K_i} L_l) = L_\ell $ whenever $k,\ell\in\K_i$.
Both parts of Lemma~\ref{lem:graph_sums_equivalence} are used for the fourth equality, and Lemma~\ref{lem:prop_L} for the final equality.

Similarly, with $X_k$ as per \eqref{eq:t6P1}, given \eqref{eq:propC}, 
$\sum_{i\in \V} X_i = ( \sum_{i\in \V} C_i ) \Xi_1   = \Xi_1$.
As such, $\sum_{i\in \V} \innerp{z}{X_k z} = \innerp{z}{\Xi_1 z}$ for all $z\in\L_{2\,}^{2m}$, which implies \eqref{eq:PVRGenNodes_IPX}. Further, with 
$Y_k$ as per \eqref{eq:t6P2}, in view of  the identity \eqref{eq:BhatMagicD}, linearity of $\Xi_2$, and part~\ref{lem:graph_sums_equivalence1}) of Lemma~\ref{lem:graph_sums_equivalence}, for all $z\in L_{2}^{2m}$,
\begin{align*}
\sum_{i\in \V} Y_i K_i &= \sum_{i\in \V} \tfrac{1}{2} \Xi_2 \,(\sum_{k\in \K_i}A_k) \, (\sum_{\ell\in\K_i} L_\ell ) \nonumber
\\ &= 
 \sum_{i\in \V} \tfrac{1}{2}  \Xi_2 \sum_{k\in\K_i} L_k  
\nonumber
\\
&=
\Xi_2 \sum_{k\in \M} L_k, 
\end{align*}
which implies \eqref{eq:PVRGenNodes_IPY}.
Finally, with
$W_i\succeq 0$ as per \eqref{eq:t6P0}, $\sum_{i\in \V}  W_i = \sum_{i\in \V}  C_i = I_{2m} \succ 0$; see~\eqref{eq:propC}. As such, Lemma~\ref{lem:PVRNodes} applies, and therefore, \eqref{eq:stabiliy_G_3b} for all $z\in\L_{2\,}^{2m}$, as claimed.
\end{proof}
For comparison, the link-wise decomposition from~\cite{MCLinks} is recalled below.
\begin{proposition} (\!\!\cite[Prop.~1]{MCLinks})\label{prop:sharing_couples}
For each $k\in[1:m]$, let 
\begin{align}
    W_k &:= A_k~, \label{eq:t3P0}\\
    X_k &:= \frac{1}{2}\big(A_k\Xi_1 + \Xi_1 A_k\big)~, \label{eq:t3P1}\\
    Y_k &:= \Xi_2A_k~, \label{eq:t3P2}\\
    Z_k &:= 
    A_k\big(\bigoplus_{i=1}^n D_i\big)A_k~, \label{eq:t3P3}
\end{align}
where $A_k$ is defined in \eqref{eq:whatB}, and diagonal $D_i:\L_{2\,}^{m_i}\rightarrow\L_{2\,}^{m_i}$ is such that $\Xi_{3,i}=D_i+S_i$ for some negative semi-definite $S_i:\L_{2\,}^{m_i}\rightarrow\L_{2\,}^{m_i}$, $i\in\V$, with $\Xi_1$, $\Xi_2$, $\Xi_3$ as per \eqref{t4Pi}. 
If for all $k\in\M$, there exists $\epsilon_k\in\mathbb{R}_{>0}$ such that for all $z\in\L_{2\,}^{2m}$
  \begin{equation}\label{PCSk2} 
        \innerp{\begin{bmatrix}
		I_{2m} \\
		L_k\\
		\end{bmatrix}z}{\left[ \begin{array}{cc} X_k+\epsilon_k W_k & Y_k \\
    	Y^{*}_k & Z_k \end{array} \right]\begin{bmatrix}
		I_{2m} \\
		L_k\\
		\end{bmatrix} z}\leq0,
    \end{equation}
 then there exists $\varepsilon\in\mathbb{R}_{>0}$ such that \eqref{eq:stabiliy_G_3b} for all $z\in\L_{2\,}^{2m}$.
\end{proposition}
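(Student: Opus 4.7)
The plan is to verify that the four hypotheses of Lemma~\ref{lem:PatesReadapted} are all satisfied by the selection \eqref{eq:t3P0}--\eqref{eq:t3P3}; the conclusion then follows directly from that lemma.

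First, observe that $W_k = A_k$ is diagonal with entries in $\{0,1\}$, so $W_k\succeq 0$, and since $\widetilde{\G}$ is $1$-regular, every subsystem vertex $r\in\widetilde{\V}$ belongs to exactly one edge, so $\sum_{k\in\M} A_k = I_{2m}\succ 0$, as required. Next, I would show that \eqref{PCSk2} coincides with \eqref{ISCk}: the only difference is the off-diagonal block, $Y_k = \Xi_2 A_k$ versus $\Xi_2$, but from \eqref{eq:BhatmagicA}, $A_k L_k = L_k$, so $Y_k L_k = \Xi_2 A_k L_k = \Xi_2 L_k$, making the two quadratic forms identical for every $z\in\L_{2\,}^{2m}$.

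For the third hypothesis \eqref{ISCCompa}, I would compute
\begin{equation*}
\sum_{k\in\M} X_k = \tfrac{1}{2}\Bigl(\sum_{k\in\M} A_k\Bigr)\Xi_1 + \tfrac{1}{2}\Xi_1\Bigl(\sum_{k\in\M} A_k\Bigr) = \Xi_1,
\end{equation*}
so that $\innerp{z}{\Xi_1 z} - \sum_{k\in\M}\innerp{z}{X_k z}=0$ identically.

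The main obstacle lies in establishing \eqref{ISCCompb}. Set $D := \bigoplus_{i=1}^n D_i$ (a genuine $2m\times 2m$ diagonal matrix, since each $D_i$ is diagonal) and $S := \bigoplus_{i=1}^n S_i \preceq 0$, so $\Xi_3 = D + S$. Using $A_k = A_k^\prime$ and $A_k L_k = L_k$, each term simplifies as $\innerp{L_k z}{Z_k L_k z} = \innerp{A_k L_k z}{D A_k L_k z} = \innerp{L_k z}{D L_k z}$. The crucial identity is then $LDL = \sum_{k\in\M} L_k D L_k$, which I would prove as follows: with $b_k := B_{(\cdot,k)}$, the rank-one factorization $L_k = b_k b_k^\prime$ gives $L_k D L_\ell = (b_k^\prime D b_\ell)\,b_k b_\ell^\prime$, and since $\widetilde{\G}$ is $1$-regular, the supports of $b_k$ and $b_\ell$ are disjoint for $k\neq\ell$; diagonality of $D$ then forces $b_k^\prime D b_\ell = 0$, so all cross terms vanish. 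Consequently $\innerp{Lz}{DLz}=\sum_{k\in\M}\innerp{L_k z}{DL_k z}$, and combining with $\innerp{Lz}{SLz}\leq 0$ yields
\begin{equation*}
\innerp{Lz}{\Xi_3 Lz} \leq \innerp{Lz}{DLz} = \sum_{k\in\M}\innerp{L_k z}{Z_k L_k z},
\end{equation*}
which is \eqref{ISCCompb}. The decomposition $\Xi_{3,i}=D_i+S_i$ is precisely what permits this step, as the vanishing of the cross terms requires $D$ diagonal and not merely block-diagonal by agents; the indefinite off-diagonal content of each $\Xi_{3,i}$ is absorbed into $S_i\preceq 0$. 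With all four hypotheses of Lemma~\ref{lem:PatesReadapted} verified, existence of the desired $\varepsilon>0$ follows.
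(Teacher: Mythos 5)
Your verification is correct and follows exactly the intended route: the paper itself defers the proof to \cite{MCLinks}, but the argument is precisely to instantiate Lemma~\ref{lem:PatesReadapted} with \eqref{eq:t3P0}--\eqref{eq:t3P3}, just as the paper proves the analogous Proposition~\ref{prop:sharing_nodes} via Lemma~\ref{lem:PVRNodes}. All the key points are handled properly — the reduction of \eqref{PCSk2} to \eqref{ISCk} via $A_kL_k=L_k$, the identity $\sum_{k\in\M}A_k=I_{2m}$ from $1$-regularity, and the vanishing of the cross terms $L_k D L_\ell$ for $k\neq\ell$, which is exactly where the diagonal/negative-semidefinite splitting $\Xi_{3,i}=D_i+S_i$ is needed.
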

In Proposition~\ref{prop:sharing_nodes}, $Z_i$ is not restricted to depend on only a suitable diagonal component $D_3$ of $\Xi_3$. This is one aspect of Proposition~\ref{prop:sharing_nodes} that helps mitigate potential conservativeness of the link-based decomposition in Proposition~\ref{prop:sharing_couples}. Further, the determination of $X_i$, $Y_i$, and $Z_i$ in Proposition~\ref{prop:sharing_nodes} depends on model data available to the neighbors of agent $i\in\V$. More specifically, each depends on the dynamics of agent $i$, and its neighbours in $\N_i$, as well as the associated neighbourhood links corresponding to 
the index set $\K_i$.
This stems from the block-diagonal structure of the linear $\Xi_p$, $p \in [1:3]$, in \eqref{t4Pi}, 
and the network structure in each component $L_k$, $k\in \M$, of the sub-system graph Laplacian $L=\sum_{k\in\M}L_k$.
Indeed, the positions and values of the non-zero elements in $K_i Z_i K_i$ reflect the nonzero pattern corresponding to each $L_k$ for $k\in \K_i$, and the same for $Y_i K_i$; see the proof of Lemma~\ref{lem:prop_L} for further detail. The definition of $X_i$, on the other hand, only depends on model data specific to agent $i\in\V$. This additional aspect of Proposition~\ref{prop:sharing_nodes} could potentially reduce the conservativeness of the decentralized conditions, compared to those in Proposition~\ref{prop:sharing_couples}, as each of the latter relies on more localized information that pertains only to each link $k\in\M$.
Since 
the conditions in Proposition~\ref{prop:sharing_nodes} do not require the network to have any particular interconnection structure,
the 
result is generally applicable for the scalable robust stability analysis of sparsely interconnected large-scale systems.

To illustrate the scope for reduced conservativenss provided by Proposition~\ref{prop:sharing_nodes}, consider the following path graph network example with $n=10$ agents; i.e., $m=9$, $m_1=m_{10}=1$ and $m_i=2$ for $i\in[2:9]$. Suppose that the agent dynamics is such that the non-zero entries of $H$ in \eqref{eq:blkdiag} are identical and linear time-invariant, with transfer function $1/(s+25)$. Further, suppose the links $R$ are such that the block elements of the block diagonal $\Delta = (R-I_{18})\circ T$ are sector bounded, with $\Phi_{1,i}=-2m_i\alpha \beta$, $\Phi_{2,i}=\boldsymbol{1}_{m_i}(\alpha+\beta)$ and $\Phi_{3,i}=-2 I_{m_i}$, with $\alpha=-2$, $\beta=0.15$.
Linear Matrix Inequality (LMI) conditions for semi-definite programming-based verification of the monolithic IQC \eqref{eq:stabiliy_G_2}, or \eqref{t6k2} in Proposition,~\ref{prop:sharing_nodes}, or verifying \eqref{PCSk2} in Proposition~\ref{prop:sharing_couples}, can be derived, respectively, via the well-known Kalman-Yakubovich-Popov (KYP) Lemma~\cite{rantzer1996kalman}, given a state-space model for the agent dynamics. The details have been omitted due to space limitations. The LMIs obtained for \eqref{t6k2} in Proposition~\ref{prop:sharing_nodes}, and those for \eqref{eq:stabiliy_G_3b}, are demonstrably feasible, thereby successfully guaranteeing the robust stability of the network. By contrast, the LMIs for \eqref{PCSk2} in Proposition~\ref{prop:sharing_couples} fail to be feasible for a suitable selection of parameters $\epsilon_k>0$, $k\in[1:m]$. 


\section{Conclusions}
\label{sec:conc}
Neighbourhood decentralized robust stability conditions are devised for networked systems in the presence of link uncertainty. This result is based on input-output IQCs that are used to describe the link uncertainties, and ultimately the structured robust stability certificate. An example is used to illustrate the scope for reduced conservativeness compared to existing results. Future work will explore alternative decompositions and comparisons.
It is also of interest to apply the main result to study specific network scenarios where information exchange is impacted by asynchronous time-varying delays, and dynamic quantization (e.g., see~\cite{baillieul2007control}.)

\appendix
\noindent\textbf{\emph{Proof of Lemma~\ref{lem:graph_sums_equivalence}.}} 
Item~\ref{lem:graph_sums_equivalence1}) holds because $\bigcup_{i\in\V} \K_i = \M$, and for $k\in\M$, the cardinality of $\V_k:=\{i\in\V ~|~k\in\K_i\}$ is exactly $2$, so that $ \sum_{i\in\V} \sum_{k\in\K_i} \frac{1}{2} F_k = \sum_{k\in\M} (\tfrac{1}{2}F_k + \tfrac{1}{2}F_k) = \sum_{k\in \M} F_k$.

Further, for all $k\in\M$, the edge index set $$\Ls_k = \bigcup_{i\in\V_k} \K_i\setminus\{k\},$$ and $(\K_i\setminus\{k\})~ \bigcap~(\K_j\setminus\{k\})=\emptyset$ whenever $i, j \in \V_k$ with $i\neq j$. Therefore, 
\begin{align*}
\sum_{i\in\V}~
\sum_{k\in\K_i}~
\sum_{\ell \in\K_i\setminus\{k\}} E_k\circ F_\ell
~
&=
\sum_{k\in\M}~ \sum_{i\in \V_k}~
\sum_{\ell \in\K_i\setminus\{k\}}~ E_k\circ F_\ell\\
&=
\sum_{k\in\M} ~
\sum_{\ell \in \Ls_k} ~ E_k\circ F_\ell,
\end{align*}
which is item~\ref{lem:graph_sums_equivalence4}).     \hfill $\square$

    \noindent\textbf{\emph{Proof of Lemma~\ref{lem:prop_L}.}} 
    First note that $L_k  \varGamma L_\ell = O_{2m}$ for every $k\in\M$, and $\ell \in \M \setminus (\Ls_k\cup\{k\})$.
    Indeed, given the definition of $\Ls_k$, and the block diagonal structure of $\varGamma = \oplus_{i=1}^n \varGamma_i$,
    if the sub-system graph vertex index $r\in[1:2m]$ is such $(\varGamma B_{(\cdot,\ell)})_r \neq 0$, then
    \begin{align*}
        r \notin [\sum_{h\in[1:i-1]}\!m_h+1:\sum_{h\in[1:i]}\!m_h] ~\bigcup~ [\sum_{h\in[1:j-1]}\!m_h+1:\sum_{h\in[1:j]}\!m_h]
    \end{align*}
    with $\{i,j\}=\kappa_{\widetilde{\E}}^{-1}(k)$,
    whereby $B_{(r,k)}=0$. Therefore, $(B_{(\cdot,k)})^\prime (\varGamma B_{(\cdot,\ell)}) = 0$, and thus, $$L_{k}\varGamma L_{\ell} = B_{(\cdot,k)}B_{(\cdot,k)}^\prime \varGamma B_{(\cdot,\ell)}B_{(\cdot,\ell)}^\prime=O_{2m}.$$ 

    Since $\varGamma$ is linear by hypothesis, and pointwise multiplication by each $L_k$ is linear, it follows from the preceding observation that
    \begin{align*}
    L \,\varGamma\, L  &= (\sum_{k\in\M} L_{k}) ~\varGamma ~(\sum_{\ell\in\M} L_{\ell})\\
    &= \sum_{k\in\M} ~ \sum_{\ell\in\M} L_{k}\, \varGamma \, L_{\ell}\\
    &= \sum_{k\in\M}~ \sum_{\ell\in \Ls_k\cup\{k\} } L_{k}\, \varGamma \, L_{\ell} \\
    &= \sum_{k\in\M} (~L_k\, \varGamma\, L_k + \sum_{\ell\in\Ls_k} L_k \, \varGamma\, L_{\ell}~)
    \end{align*}
    as claimed.     \hfill $\square$

\bibliography{Bib}
\bibliographystyle{ieeetr}
\end{document}